
\documentclass[a4paper,10pt]{article}

\usepackage[utf8]{inputenc}     

\usepackage{amsfonts}
\usepackage{amsthm}
\usepackage{amssymb}
\usepackage{amsmath}
\usepackage{enumerate}
\usepackage{tikz}
\usepackage{url}
\usepackage{nameref} 
\usepackage{hyperref} 
\usepackage{float}

\newtheorem{theorem}{Theorem}

\newtheorem{lemma}[theorem]{Lemma}
\newtheorem{proposition}[theorem]{Proposition}

\hyphenpenalty=1000 
\exhyphenpenalty=1000
\binoppenalty=1000 
\relpenalty=500

\definecolor{darkgreen}{RGB}{0,128,0}

\newcommand{\trans}[1]{\mathchoice{\xrightarrow{#1}}{\xrightarrow{\smash{\lower1pt\hbox{$\scriptstyle #1$}}}}{\text{Error}}{\text{Error}}}
\newcommand{\transducer}{\mathcal{T}}
\newcommand{\automaton}{\mathcal{A}}
\newcommand{\size}[1]{\lvert #1 \rvert}
\newcommand{\sizeSet}[1]{\# #1}
\newcommand{\freq}[0]{\operatorname{freq}}

\newcommand{\WG}[1]{:#1}
\newcommand{\weight}[0]{\operatorname{weight}}
\newcommand{\pEmpty}{p^\lambda}
\newcommand{\pOutput}{p^o}
\newcommand{\qEmpty}{q^\lambda}
\newcommand{\qOutput}{q^o}

\newcommand{\R}[1]{\textcolor{red}{#1}}
\newcommand{\A}[1]{\textcolor{blue}{#1}}
\newcommand{\V}[1]{\textcolor{darkgreen}{#1}}

\begin{document}
\title{Preservation of normality by transducers}

\author{Olivier Carton \and  Elisa Orduna}
\date{October 31, 2018}
\maketitle


\begin{abstract}
  We consider input-deterministic finite state transducers with infinite
  inputs and infinite outputs, and we consider the property of Borel
  normality on infinite words.  When these transducers are given by a
  strongly connected set of states, and when the input is a Borel normal
  sequence, the output is an infinite word such that every word has a
  frequency given by a weighted automaton over the rationals.  We prove
  that there is an algorithm that decides in cubic time whether an
  input-deterministic transducer preserves normality.
\end{abstract}

\noindent Keywords: transducers, weighted automata, normal sequences

\section{Introduction}

We start with the definition of \emph{normality} for real numbers, given by
\'Emile Borel~\cite{Borel09} more than one hundred years ago. A real
number is normal to an integer base if, in its infinite expansion expressed
in that base, all blocks of digits of the same length have the same
limiting frequency.  Borel proved that almost all real numbers are normal
to all integer bases. However, very little is known on how to prove that a
given number has the property.

The definition of normality was the first step towards a definition of
randomness.  Normality formalizes the least requirements about a random
sequence.  It is indeed expected that in a random sequence, all blocks of
symbols with the same length occur with the same limiting frequency.
Normality, however, is a much weaker notion than the one of purely random
sequences defined by Martin-L\"of.

The motivation of this work is the study of transformations preserving
randomness, hence preserving normality.  The paper is focused on very
simple transformations, namely those that can be realized by finite-state
machines.  We consider input-deterministic automata with outputs, also
known as sequential transducers mapping infinite sequences of symbols to
infinite sequences of symbols.  The main result is that it can be decided
in cubic time whether such a machine preserves or not normality.
Preserving normality means that the output sequence is normal whenever the
input sequence is.

The main result is obtained through a second result involving weighted
automata.  This second result states that if a sequential transducer is
strongly connected, then the frequency of each block in the output of a run
with a normal input is given by a weighted automaton on rational numbers.
It implies, in particular, that the frequency of each block in the output
sequence does not depend on the input as long as this input sequence is
normal.

This is not the first result linking normality and automata.  A fundamental
theorem relates normality and finite automata: an infinite word is normal
to a given alphabet if and only if it cannot be compressed by lossless
finite transducers. These are deterministic finite automata with injective
input-output behaviour. This result was first obtained by joining a theorem
by Schnorr and Stimm~\cite{Schnorr71} with a theorem by Dai, Lathrop, Lutz
and Mayordomo~\cite{Dai04}. Becher and Heiber gave a direct
proof~\cite{BecherHeiber13}.

Agafonov's Theorem \cite{Agafonov68} is another striking result relating
normality and automata.  It establishes that oblivious selection of symbols
with a regular set preserves normality.  Oblivious selection with a regular
set~$L$ means that a symbol~$a_i$ is selected whenever the prefix
$a_1 \cdots a_{i-1}$ belongs to~$L$.  This oblivious selections can
actually be realized by sequential transducers considered in this work.

The paper is organized as follows.  Notions of normal sequences and
transducers are introduced in Section~\ref{sec:basic}.  Main results are
stated in Section~\ref{sec:results}.  Proofs of the results and
algorithms are given in Section~\ref{sec:proofs}.

\section{Basic Definitions} \label{sec:basic}

\subsection{Normality}

Before giving the formal definition of normality, let us introduce some
simple definitions and notation. Let $A$ be a finite set of symbols that we
refer to as the \emph{alphabet}. We write $A^{\omega}$ for the set of all
infinite words on the alphabet~$A$ and $A^*$ for the set of all finite
words.  The length of a finite word $w$ is denoted by $|w|$.  The positions
of finite and infinite words are numbered starting from~$1$. To denote the
symbol at position $i$ of a word $w$ we write $w[i]$, and to denote the
substring of $w$ from position $i$ to $j$ inclusive we write $w[i \hdots
j]$.  The empty word is denoted by~$\lambda$.  The cardinality of a finite
set~$E$ is denoted by~$\#E$.

Given two words $w$ and $v$ in $A^*$, the number $|w|_v$ of
occurrences of $v$ in $w$ is defined by:
\begin{displaymath}
  |w|_v = \# \{ i : w[i \hdots i+\lvert v \rvert - 1] = v \} .
\end{displaymath}
For example, $|abbab|_{ab} = 2$.

Given a finite word $w \in A^+$ and an infinite word $x \in A^{\omega}$, we
refer to the \emph{frequency of $w$ in $x$} as
\begin{displaymath}
   \freq(x,w) = \lim_{n \rightarrow \infty} \frac{\size{x[1 \dots n]}_{w}}{n}
\end{displaymath}
when this limit is well-defined.

An infinite word $x \in A^{\omega}$ is \emph{normal} on the alphabet $A$
if for every word $w \in A^*$:
\begin{displaymath}
  \freq(x,w) = \frac{1}{(\sizeSet{A})^{|w|}}
\end{displaymath}

An alternative definition of normality can be given by counting
\emph{aligned} occurrences, and it is well-known that they are equivalent
(see for example \cite{BecherCarton18}).  We refer the reader to
\cite[Chap.4]{Bugeaud12} for a complete introduction to normality.

The most famous example of a normal word is due to
Champernowne~\cite{Champernowne33}, who showed in 1933 that the infinite
word obtained from concatenating all the natural numbers (in their usual
order):
\begin{displaymath}
  0123456789101112131415161718192021222324252627282930\hdots
\end{displaymath}
is normal on the alphabet $\{0,1,\ldots,9\}$.

\subsection{Input-deterministic Transducers}

In this paper we consider automata with outputs, also known as transducers.
Such finite-state machines are used to realize functions mapping words to
words and especially infinite words to infinite words.  We only consider
input-deterministic transducers, also known as \emph{sequential} in the
literature (See \cite[Sec.~V.1.2]{Sakarovitch09} and
\cite[Chap.~IV]{Berstel79}).  Each transition of these transducers consumes
exactly one symbol of their input and outputs a finite word which might be
empty.  Furthermore, ignoring the output label of each transition must
yield a deterministic automaton.

More formally a \emph{transducer} $\transducer$ is a tuple $\langle
Q,A,B,\delta,q_0 \rangle$, where $Q$ is a finite set of states, $A$ and $B$
are the input and output alphabets respectively, $\delta \subseteq Q \times
A \times B^* \times Q$ is a finite transition relation and $q_0 \in Q$ is
the initial state.  A transition is a tuple $\langle p,a,v,q \rangle$ in $Q
\times A \times B^* \times Q$ and it is written $p \trans{a|v} q$.

\begin{figure}[htbp]
  \begin{center}
  \begin{tikzpicture}
  \node (qinit) at (-1,1) {} ;
  \node[style={circle,draw}] (q1) at (0,0) {$1$} ;
  \node[style={circle,draw}] (q2) at (2,-2) {$2$} ;
  \node[style={circle,draw}] (q3) at (2,0) {$3$} ;
  \path [->] (qinit) edge [bend left] (q1) ;
  \path [->, loop left] (q1) edge node {$a | a$} (q1) ;
  \path [->]             (q1) edge [bend left] node {\hspace{-.5cm}\raisebox{-.75cm}{$b | \lambda$}} (q2) ;
  \path [->]             (q2) edge node {\hspace{.75cm}{$a | \lambda$}} (q3) ;
  \path [->]             (q2) edge [bend left] node {\hspace{-1cm}\raisebox{-.75cm}{$b | bb$}} (q1) ;
  \path [->, loop right] (q3) edge node {$a | \lambda$} (q3) ;
  \path [->]             (q3) edge [right] node[xshift=-7pt] {\raisebox{.5cm}{$b | ba$}} (q1) ;
  \end{tikzpicture}
  \caption{A deterministic complete transducer}
  \label{fig:transducer}
  \end{center}
\end{figure}
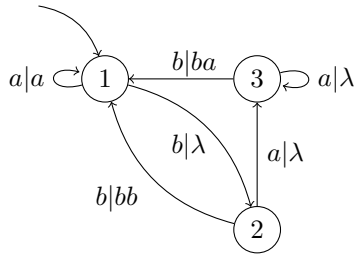

A transducer $\transducer$ is \emph{input-deterministic}, or
\emph{sequential} for short, if whenever $p \trans{a | v} q$ and
$p \trans{a \mid v'} q'$ are two of its transitions, then $q=q'$ and
$v=v'$.  A transducer $\transducer$ is \emph{complete} if for each symbol
$a \in A$ and each state $p \in Q$ there is a transition from $p$ and
consuming $a$, that is, there exists a transition $p \trans{a | v} q$.

A finite (respectively infinite) \emph{run} in $\transducer$ is a finite
(respectively infinite) sequence of consecutive transitions,
\begin{displaymath}
   q_0 \trans{a_1|v_1} q_1
       \trans{a_2|v_2}
       \cdots
       q_{n-1}
       \trans{a_n|v_n} q_n
\end{displaymath}
Its \emph{input} and \emph{output labels} are the words $a_1 a_2 \ldots
a_n$ and $v_1v_2\cdots v_n$ respectively.  Note that there is no accepting
condition and note also that the output label of an infinite run might be
finite since the output label of some transitions might be empty.  An
infinite run is \emph{accepting} if its first state is initial and its
output label is infinite.  If $\mathcal{T}$ is an input-deterministic
transducer, each infinite word~$x$ is the input label of at most one
accepting run in~$\mathcal{T}$.  When this run does exist, its output is
denoted by~$\mathcal{T}(x)$.

Each transducer~$\mathcal{T}$ can be seen as a graph by ignoring the labels
of its transitions.  For this reason, we may consider \emph{strongly
  connected components} (SCC) of~$\mathcal{T}$.  Using the terminology of
Markov chains, a strongly connected component is called \emph{recurrent} if
no transition leaves it.

We say that a input-deterministic transducer $\transducer$ \emph{preserves
  normality} if for each normal word~$x$, $\transducer(x)$ is also normal.

\subsection{Weighted Automata}

We now introduce weighted automata.  In this paper we only consider
weighted automata whose weights are rational numbers with the usual
addition and multiplication (See \cite[Chap.~III]{Sakarovitch09} for 
a complete introduction).

A \emph{weighted automaton} $\automaton$ is a tuple $\langle Q,B,\Delta,I,F
\rangle$, where $Q$ is the state set, $B$ is the alphabet,
$I:Q\rightarrow \mathbb{Q}$ and $F:Q\rightarrow \mathbb{Q}$ are the
functions that assign to each state an initial and a final weight and
$\Delta:Q \times B \times Q \rightarrow \mathbb{Q}$ is a function that
assigns to each transition a weight.

As usual, the weight of a run is the product of the weights of its
transitions times the initial weight of its first state and times the final
weight of its last state.  Furthermore, the weight of a word $w \in B^*$ is
the sum of the weights of all runs with label~$w$.

\begin{figure}[htbp]
  \begin{center}
  \begin{tikzpicture}
  \node (qinitq0) at (-1,0) {} ;
  \node (qfinq1) at (3,0) {} ;
  \node[style={circle,draw}] (q0) at (0,0) {$q_0$} ;
  \node[style={circle,draw}] (q1) at (2,0) {$q_1$} ;
  \path [->] (qinitq0) edge node {\raisebox{.6cm}{\R{1}}} (q0);
  \path [->] (q1) edge node {\raisebox{.6cm}{\A{1}}} (qfinq1);
  \path [->] (q0) edge node {\raisebox{.65cm}{$1$\V{$\WG{1}$}}} (q1) ;
  \path [->, loop above] (q0) edge node {$\begin{array}{c} 1\V{\WG{1}} \\ 0\V{\WG{1}} \end{array}$} (q0) ;
  \path [->, loop above] (q1) edge node {$\begin{array}{c} 0\V{\WG{2}} \\ 1\V{\WG{2}} \end{array}$} (q1) ;
  \end{tikzpicture}
  \end{center}
  \caption{A weighted automaton}
  \label{fig:weighted} 
\end{figure}
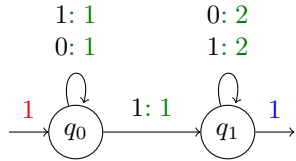

A weighted automaton is pictured in Figure~\ref{fig:weighted}.  A
transition $p \trans{a} q$ with weight $x$ is pictured $p \trans{a\WG{x}}
q$.  Non-zero initial and final weights are given over small incoming and
outgoing arrows.  The weight of the run $q_0 \trans{1} q_1 \trans{0} q_1
\trans{1} q_1 \trans{0} q_1$ is $1 \cdot 1 \cdot 2 \cdot 2 \cdot 2 \cdot 1
= 8$.  The weight of the word $w = 1010$ is $8 + 2 = 10$.  More generally
the weight of a word~$w = a_1 \cdots a_k$ is the integer~$n = \sum_{i=1}^k
{a_i2^{k-i}}$ ($w$ is a binary expansion of~$n$ with possibly some leading
zeros).

\section{Results} \label{sec:results}

We now state the main results of the paper.  The first one states that when
a transducer is strongly connected, deterministic and complete, the
frequency of each finite word~$w$ in the output of a run with a normal
input label is given by a weighted automaton over~$\mathbb{Q}$.  The second
one states that it can be checked in cubic time whether an
input-deterministic transducer preserves normality.

\begin{theorem}\label{thm:WAinCubicTime}
  Given a transducer $\transducer = \langle Q,A,B,\delta,q_0 \rangle$ which
  is strongly connected, deterministic and complete, there exists a weighted
  automaton $\automaton$ such that for any normal word~$x$ and for any
  finite word $w$, $\freq(\mathcal{T}(x),w) = \weight_{\automaton}(w)$.
  
  Furthermore, the weighted automaton~$\automaton$ can be computed in cubic
  time with respect to the size of the transducer~$\transducer$.
\end{theorem}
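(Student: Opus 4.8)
The plan is to read the frequency of $w$ in the output off a finite, local analysis of how input blocks are translated, and then to repackage that analysis as a weighted automaton. Since $\mathcal{T}$ is deterministic, complete and strongly connected, its underlying transition structure, with each symbol of $A$ given probability $1/\sizeSet{A}$, is an irreducible finite Markov chain on $Q$. I would let $\pi$ be its unique stationary distribution and set $\rho = \sum_{p\in Q}\pi(p)\sum_{a\in A}\frac{1}{\sizeSet{A}}\,\size{v_{p,a}}$ to be the mean output length per input symbol, where $p\trans{a|v_{p,a}}q$ is the unique transition leaving $p$ and reading $a$. Both $\pi$ and $\rho$ are obtained by solving a single linear system of size $\sizeSet{Q}$.

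First I would establish the bridge between normality of $x$ and these stationary quantities: for a normal input $x$, the frequency over positions $i$ of the joint event ``the run is in state $p$ after reading $x[1\dots i]$ and $x[i+1\dots i+\ell]=u$'' is exactly $\pi(p)\,(\sizeSet{A})^{-\ell}$, for every state $p$, length $\ell$ and block $u\in A^\ell$. The delicate point is that being in state $p$ depends on the whole prefix $x[1\dots i]$ rather than on a bounded window, so this is not a direct block-frequency statement; it is precisely here that the machinery behind Agafonov's theorem enters, since selecting the positions at which the run visits a fixed state $p$ is an oblivious, prefix-determined (regular) selection, under which a normal word stays normal. This step, together with the ensuing convergence of the state-visit frequencies to $\pi$ (which also yields existence of the limit defining $\freq(\mathcal{T}(x),w)$), is the main obstacle; everything afterwards is bookkeeping.

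Granting this, I would express the frequency as a local sum. Every occurrence of $w$ in $\mathcal{T}(x)$ begins at a unique output symbol, which is produced by a unique transition and sits at a definite offset inside that transition's output. Counting occurrences per output symbol, and dividing the per-input-symbol count by $\rho$, gives $\freq(\mathcal{T}(x),w)=\frac{1}{\rho}\sum \pi(p)\,(\sizeSet{A})^{-r}$, the sum ranging over all choices of a start state $p$, an offset inside the first transition's output, and a block $a_1\cdots a_r$ read from $p$ whose output contains $w$ at that offset. Because $\size{w}$ is fixed, only boundedly many symbols past the first matched one are constrained, so each summand is controlled; the contribution of interleaved runs of empty-output (silent) transitions is summed by the geometric series $(I-S)^{-1}=\sum_{n\ge 0}S^n$, where $S$ is the sub-stochastic matrix of silent transitions weighted by $1/\sizeSet{A}$. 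Strong connectivity together with $\rho>0$ forces the silent sub-process to be transient, so $I-S$ is invertible and the series converges.

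Finally I would realize this sum as a weighted automaton $\mathcal{A}$ over $B$. Its states are those of $\mathcal{T}$ together with the intermediate states obtained by subdividing each transition so that it emits one output symbol at a time; a single letter of $w$ is then matched by one emitting step, after absorbing any preceding silent transitions through $(I-S)^{-1}$. The initial weight of a state encodes the normalized frequency $\pi(p)/\rho$ with which a matching of $w$ may begin at the corresponding output position, the transition weights are the $(I-S)^{-1}$-smoothed $1/\sizeSet{A}$ factors, and the final weight of every state is $1$, since an occurrence of $w$ imposes no constraint on what follows. By construction the weight of a run labelled $w$ equals the contribution of the corresponding embedding, so $\weight_{\mathcal{A}}(w)=\freq(\mathcal{T}(x),w)$. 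For the complexity, the subdivided automaton has size linear in $\mathcal{T}$, and $\mathcal{A}$ results from one stationary-distribution computation and one matrix inversion, each cubic in the number of states; hence $\mathcal{A}$ is computable in cubic time in the size of $\mathcal{T}$.
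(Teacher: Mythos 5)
Your architecture is essentially the paper's: the crucial input is that on a normal word the run visits each state $p$ with frequency $\pi(p)$ and that, conditioned on being at $p$, every input block of length $\ell$ has frequency $(\sizeSet{A})^{-\ell}$ (the paper's Lemmas~\ref{lem:statefreq} and~\ref{lem:runfreq}); your subdivision of long outputs into one-symbol emissions is the paper's normalization $\transducer'$; your $(I-S)^{-1}$ is exactly the paper's $E^*=(I-E)^{-1}$; and your transition and final weights match $E^*N_b$ and $1$. One caution on the ``bridge'': Agafonov's theorem (normality of an obliviously selected subsequence) does not by itself yield the joint frequency $\pi(p)(\sizeSet{A})^{-\ell}$ --- it says nothing about the density of the selected positions or the joint event. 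The statement you need is true, but the right tool is the Schnorr--Stimm state-frequency lemma applied to a product (``snake'') automaton, which is how the paper proves it.

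The genuine gap is the initial weights. You propose giving a state the initial weight $\pi(p)/\rho$, where $\pi$ is the stationary distribution of the input chain and $\rho$ the mean output length per input symbol. The correct initial weight is the frequency of $p$ among \emph{output-symbol boundaries}, i.e., among states reached immediately after a symbol is emitted; this is the stationary distribution of the induced stochastic matrix $P=\sum_{b}E^*N_b$ and it differs from $\pi(p)/\rho$ at every state that can be entered by a silent transition. In the paper's running example, state $2$ is entered only via the silent transition $1 \trans{b|\lambda} 2$, so its correct initial weight is $0$, whereas $\pi_2/\rho=(1/4)/(3/4)=1/3$; since from state $2$ the automaton assigns total weight $1$ to the letter $b$ and $0$ to $a$, this spurious mass visibly corrupts the computed frequencies already for single letters. (Your formula also says nothing about the initial weights of the subdivided intermediate states.) Your verbal description --- ``the frequency with which a matching of $w$ may begin at the corresponding output position'' --- is the right notion; the formula just does not implement it. Repairing this requires solving a second linear system (for the stationary distribution of $P$) and, as the paper does with a separate state-splitting construction, an argument that these boundary-state frequencies actually exist for a normal input; neither step changes the cubic complexity bound.
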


The hypothesis that the transducer is complete guarantees that each normal
word~$x$ is the input label of an infinite run. Nevertheless, this run may
not be accepting since its output label is not necessarily infinite.  The
really restrictive hypothesis is that the transducer must be
input-deterministic.

To illustrate Theorem~\ref{thm:WAinCubicTime} we give in
Figure~\ref{fig:weighted2} a weighted automaton~$\mathcal{A}$ which
computes the frequency of each finite word~$w$ in $\mathcal{T}(x)$ for a
normal input~$x$ and the transducer~$\mathcal{T}$ pictured in
Figure~\ref{fig:transducer}.  States $2$ and~$3$ are useless and could be
removed since their initial weight is zero and they have no incoming
transitions.  They have been kept because the automaton pictured in
Figure~\ref{fig:weighted2} is the one given by the procedure described in
the next section.

\begin{figure}[htbp]
  \begin{center}
  \begin{tikzpicture}
    \node (qinit1) at (-1,1) {} ;
    \node (qinit4) at (-1,-2) {} ;
    \node (qinit5) at (2,1.5) {} ;
    \node (qfin2) at (3,-2) {} ;
    \node (qfin3) at (3,0) {} ;
    \node[style={circle,draw}] (q1) at (0,0) {$1$} ;
    \node[style={circle,draw}] (q2) at (2,-2) {$2$} ;
    \node[style={circle,draw}] (q3) at (2,0) {$3$} ;
    \node[style={circle,draw}] (q4) at (0,-2) {$4$} ;
    \node[style={circle,draw}] (q5) at (1,1.5) {$5$} ;
    \path [->] (qinit1) edge node {\hspace{-.75cm}\raisebox{-0.45cm}{\R{$2/3$}}} (q1) ;
    \path [->] (qinit4) edge node {\raisebox{-0.75cm}{\R{$1/6$}}} (q4) ;
    \path [->] (qinit5) edge node {\raisebox{-0.75cm}{\R{$1/6$}}} (q5) ;
    \path [transform canvas={yshift=0.5ex, xshift=0.7ex}, ->] (q1) edge node {\raisebox{0.55cm}{\A{$1$}}} (qinit1) ;
    \path [transform canvas={yshift=0.7ex}, ->] (q4) edge node {\raisebox{0.35cm}{\A{$1$}}}  (qinit4) ;
    \path [transform canvas={yshift=0.7ex}, ->] (q5) edge node {\raisebox{0.35cm}{\A{$1$}}} (qinit5) ;
    \path [->] (q2) edge node {\raisebox{0.35cm}{\A{$1$}}} (qfin2) ;
    \path [->] (q3) edge node {\raisebox{0.35cm}{\A{$1$}}} (qfin3) ;
    \path [->, loop left]                        (q1) edge node {$a\V{\WG{1/2}}$} (q1) ;
    \path [transform canvas={xshift=-0.6ex}, ->] (q1) edge node {\hspace{-1.25cm}{$b\V{\WG{1/4}}$}} (q4) ;
    \path [->, bend left]                        (q1) edge node {\hspace{-.5cm}\raisebox{1.05cm}{$b\V{\WG{1/4}}$}} (q5) ;
    \path [->]                                   (q2) edge node {\raisebox{0.40cm}{$b\V{\WG{1/2}}$}} (q4) ;
    \path [->]                                   (q2) edge node {\hspace{1.5cm}\raisebox{-1.5cm}{$b\V{\WG{1/2}}$}} (q5) ;
    \path [->]                                   (q3) edge node {\hspace{0.85cm}{$b\V{\WG{1}}$}}   (q5) ;
    \path [transform canvas={xshift=0.6ex}, ->]  (q4) edge node {\hspace{0.75cm}$b\V{\WG{1}}$}   (q1) ;
    \path [->, bend left]                        (q5) edge node {\raisebox{-1.05cm}{$a\V{\WG{1}}$}}   (q1) ;
  \end{tikzpicture}
  \end{center}
  \caption{A weighted automaton for the transducer pictured in Fig.~\ref{fig:transducer}}
  \label{fig:weighted2}
\end{figure}
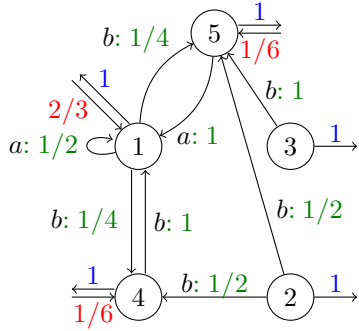

\begin{theorem}\label{thm:determinePreservation}
  Given a transducer $\transducer = \langle Q,A,B,\delta,q_0 \rangle$ which
  is complete and deterministic, it can be decided in cubic time with
  respect to the size of~$\mathcal{T}$ whether $\transducer$ preserves
  normality.
\end{theorem}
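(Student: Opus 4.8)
The plan is to reduce the problem to the strongly connected situation already settled by Theorem~\ref{thm:WAinCubicTime}, and then to recast the a priori infinitely many conditions defining normality of the output as a single equivalence test between weighted automata, which is decidable in polynomial time over the field~$\mathbb{Q}$.

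First I would analyse the runs of $\transducer$. Since $\transducer$ is complete and deterministic, every infinite word $x$ labels a unique infinite run from $q_0$, and the sequence of states it visits moves monotonically through the condensation of the state graph; as there are finitely many strongly connected components, the run eventually stays forever inside a single recurrent SCC $C$. The finite prefix of the run before it enters $C$ contributes only a finite prefix to the output and therefore leaves every frequency $\freq(\transducer(x),w)$ unchanged. Moreover, the restriction $\transducer_C$ of $\transducer$ to a recurrent SCC $C$ is again complete (no transition leaves $C$, and completeness of $\transducer$ provides, for each state of $C$ and each input symbol, a transition, which necessarily stays in $C$), deterministic and strongly connected, so Theorem~\ref{thm:WAinCubicTime} applies and yields a weighted automaton $\automaton_C$; because $\transducer_C$ is strongly connected, the function it computes does not depend on which entry state we choose as initial. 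Finally, every SCC $C$ reachable from $q_0$ is actually reached by some normal word: if $u$ labels a path from $q_0$ into $C$ and $y$ is any normal word, then $uy$ is normal (normality is insensitive to finite prefixes) and the run of $uy$ confines to $C$, since no transition leaves~$C$.

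These observations give a finite characterization of preservation: $\transducer$ preserves normality if and only if, for every recurrent SCC $C$ reachable from $q_0$ that produces infinite output, the automaton $\automaton_C$ satisfies $\weight_{\automaton_C}(w) = 1/(\sizeSet{B})^{\size{w}}$ for every $w \in B^*$. Indeed, for a normal input confining to $C$ the output frequency of $w$ equals $\weight_{\automaton_C}(w)$ by Theorem~\ref{thm:WAinCubicTime}, and the output is normal precisely when all these values coincide with the uniform value $1/(\sizeSet{B})^{\size{w}}$. On a normal input every transition of a strongly connected component is taken with positive frequency, hence infinitely often, so $C$ produces infinite output exactly when some transition of $C$ has a non-empty output label; a reachable recurrent SCC all of whose transitions output $\lambda$ yields only finite output and is recognizable directly from the transition labels (it is excluded, or treated as a failure, according to the convention chosen for undefined outputs). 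The uniform target function $w \mapsto 1/(\sizeSet{B})^{\size{w}}$ is itself computed by a trivial one-state weighted automaton $\automaton^\star$ with initial and final weight $1$ and a self-loop of weight $1/\sizeSet{B}$ on each symbol of $B$.

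The decision procedure is then to test, for each such $C$, whether $\automaton_C$ and $\automaton^\star$ are equivalent, i.e.\ assign the same weight to every word. Here I would invoke the classical fact that equivalence of weighted automata over a field is decidable in polynomial time: forming the difference automaton $\automaton_C \oplus (-\automaton^\star)$ and computing, by an incremental linear-algebra fixpoint, the subspace spanned by the vectors reachable from the initial vector under the transition matrices, one checks whether every reachable vector pairs to $0$ with the final-weight vector; the two automata are equivalent exactly when this holds. This is the crux of the argument, since it converts the infinitely many frequency constraints into one finite test over~$\mathbb{Q}$. I expect the main obstacle to be the complexity bookkeeping rather than correctness: the strongly connected components are found in linear time; Theorem~\ref{thm:WAinCubicTime} produces each $\automaton_C$ in time cubic in $\size{C}$, and since the recurrent components are disjoint these costs sum to a cubic bound in $\size{\transducer}$; and the equivalence test, run on automata of size $O(\size{C})$, must likewise fit in a cubic budget, which requires a subspace-reduction procedure running in time cubic in the number of states for the fixed output alphabet~$B$.
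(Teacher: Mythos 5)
Your proposal follows essentially the same route as the paper: decompose into strongly connected components, observe that a run on a normal input eventually confines to a reachable recurrent SCC and that the finite prefix does not affect frequencies, build the weighted automaton of Theorem~\ref{thm:WAinCubicTime} for each such component, and test equivalence with the one-state uniform automaton via the standard linear-algebra (Sch\"utzenberger-style) procedure in cubic time. The only divergence is cosmetic and in your favour: you correctly state the target weight as $1/(\sizeSet{B})^{\size{w}}$ over the output alphabet, whereas the paper's text and figure caption write $1/(\sizeSet{A})^{\size{w}}$, an apparent typo.
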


\section{Algorithms and Proofs} \label{sec:proofs}

In this section we provide the proofs for Theorems \ref{thm:WAinCubicTime}
and~\ref{thm:determinePreservation}.  The next proposition shows that it
suffices to independently analyze each recurrent strongly connected
component of the transducer.
\begin{proposition}\label{prp:decomposition}
  A deterministic and complete transducer preserves normality if and only
  if each of its recurrent strongly connected components preserves
  normality.
\end{proposition}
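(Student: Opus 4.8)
The plan is to combine two elementary facts. First, normality is invariant under adding or deleting a finite prefix: if two infinite words differ only on a finite prefix, then one is normal iff the other is, since $\freq$ is a limit and is unaffected by finitely many positions. Second, because $\transducer$ is deterministic and complete, every infinite word~$x$ labels a unique infinite run from~$q_0$; and since the strongly connected components of~$\transducer$, ordered by reachability, form a directed acyclic graph through which a run can only move forward, this run eventually remains inside a single component. The technical lemma I would isolate is that when $x$ is \emph{normal} this terminal component is \emph{recurrent}. I would also record that a recurrent component~$C$ is itself a strongly connected, deterministic and complete transducer: completeness of~$C$ follows from completeness of~$\transducer$ together with the fact that, by recurrence, no transition leaves~$C$, so every transition out of a state of~$C$ stays in~$C$. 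Thus Theorem~\ref{thm:WAinCubicTime} applies to~$C$ with any of its states as initial state, and I read ``\,$C$ preserves normality\,'' as meaning this holds for \emph{every} choice of initial state in~$C$; with this reading the two implications below match up exactly. I may assume $\transducer$ is accessible (every state reachable from~$q_0$), since components meeting no run from~$q_0$ are irrelevant to the behaviour of~$\transducer$.

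For ($\Rightarrow$), assume $\transducer$ preserves normality and fix a recurrent component~$C$ and a state $p\in C$. By accessibility and strong connectivity there is a finite input~$u$ driving the run from~$q_0$ to~$p$. For an arbitrary normal word~$y$, set $x=uy$, which is normal by prefix invariance. The run of~$\transducer$ on~$x$ first reads~$u$, reaching~$p$, and then reads~$y$; since $C$ is recurrent the run stays in~$C$ throughout~$y$, so after the prefix~$u$ it coincides with the run of~$C$ on~$y$ started at~$p$. Hence $\transducer(x)=v\cdot C(y)$, where $v$ is the finite output produced along~$u$ and $C(y)$ denotes the output of~$C$ on~$y$ from~$p$. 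By hypothesis $\transducer(x)$ is normal whenever it is infinite, and then, again by prefix invariance, $C(y)$ is normal. As $p$ and~$y$ were arbitrary, $C$ preserves normality.

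For ($\Leftarrow$), assume every recurrent component preserves normality and let~$x$ be normal. By the lemma the run on~$x$ eventually stays inside a recurrent component~$C$; let~$N$ be a position from which this holds and let~$p$ be the state reached after reading the prefix~$u$ of length~$N$, so that $x=uy$ with~$y$ the corresponding suffix. Then~$y$ is normal by prefix invariance, and the run on~$x$ decomposes as the finite run on~$u$ followed by the run of~$C$ on~$y$ from~$p$, giving $\transducer(x)=v\cdot C(y)$ with~$v$ finite. Since $C$ preserves normality, $C(y)$ is normal whenever it is infinite, and by prefix invariance $\transducer(x)$ is then normal. The two scopes agree, because prepending the finite word~$v$ changes neither the finiteness nor the normality of the tail: when $C(y)$ is finite so is $\transducer(x)$, and there is nothing to prove.

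The hard part is the lemma that a normal input drives the run into a \emph{recurrent} component. The acyclic-condensation argument already yields that the run settles in some component~$C$; suppose for contradiction~$C$ is not recurrent, so some state $p\in C$ carries a transition $p\xrightarrow{a} q'$ with $q'\notin C$. From the moment the run enters~$C$ the remaining input is still normal, and $C$ is strongly connected and complete; the crux is to show that~$p$ is then visited with positive frequency and, at those visits, is followed by the symbol~$a$ with positive frequency. This would force the forbidden transition to be taken, contradicting that the run never leaves~$C$. Establishing these positive frequencies is precisely where the richness of normal words enters: it is the statement that in a strongly connected complete deterministic automaton fed a normal word, every transition is exercised with a positive limiting frequency — the same phenomenon that Theorem~\ref{thm:WAinCubicTime} quantifies for outputs. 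I expect this frequency estimate, rather than the two transfer arguments above, to be the technical heart of the proof.
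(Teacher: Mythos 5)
Your proposal is correct and follows essentially the same route as the paper: the paper derives the proposition directly from the lemma (Satz~2.5 of Schnorr--Stimm) that a run labeled by a normal word always reaches a recurrent strongly connected component, which is exactly the key lemma you isolate, and your two prefix-invariance transfer arguments are the ``follows directly'' part the paper leaves implicit. The only difference is that you also sketch a proof of that lemma via positive transition frequencies, whereas the paper simply cites~\cite{Schnorr71}.
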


The previous proposition follows directly from the next lemma
which is Satz~2.5 in~\cite{Schnorr71}.
\begin{lemma}
  A run labeled with a normal word in a deterministic and complete automaton
  always reaches a recurrent strongly connected component.
\end{lemma}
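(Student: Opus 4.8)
The plan is to separate the purely graph-theoretic content from the role played by normality. First I would observe that, since the automaton has finitely many states, its strongly connected components form a finite directed acyclic graph (the condensation), and every transition of the run either stays inside the current component or moves strictly forward in this DAG. A forward move can therefore occur only finitely often, so the run is eventually confined to a single strongly connected component; more precisely, the set $C$ of states visited infinitely often is strongly connected (using only the transitions taken infinitely often), and there is a position $N$ after which $q_n \in C$ for all $n$. This step uses neither normality nor determinism in any essential way.

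It then suffices to show that $C$ is \emph{closed}, i.e.\ that no transition leaves $C$, since a closed strongly connected set is exactly a recurrent strongly connected component and the run lies inside it. Suppose, for contradiction, that some transition $p \trans{a} r$ has $p \in C$ but $r \notin C$. As $r$ is not visited infinitely often whereas $p$ is, this transition is taken only finitely often; hence there is a position after which, whenever the run is in state $p$, the next input symbol differs from $a$. Writing $q_n = \delta(q_0, x[1 \dots n])$ for the state reached after the prefix of length $n$, this is a constraint of the form: for all large $n$, if $q_n = p$ then $x[n+1] \neq a$.

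The last step, where normality becomes indispensable, is to turn this constraint into a violation of normality. The delicate point, and the main obstacle, is that an occurrence of a fixed block in $x$ does not by itself reveal the state of the run at that position, because a sequential automaton need not be synchronizing; so one cannot directly read off the visits to $p$ from block frequencies. To bridge this gap I would use normality twice: once to show that the state $p$ is visited with strictly positive lower frequency, and once to show that, among the positions where the run sits in $p$, each symbol of the input alphabet occurs next with positive relative frequency (an equidistribution property forced by $x$ being normal together with the fact that $q_n$ depends only on the prefix $x[1 \dots n]$). Together these contradict the constraint above, which asserts that $a$ ultimately never follows a visit to $p$. Establishing this positive-frequency and equidistribution statement is the technical heart of the argument; it is exactly what fails for an arbitrary (non-normal) input and what is isolated by Schnorr and Stimm, so I expect it, rather than the graph-theoretic settling, to be the real work.
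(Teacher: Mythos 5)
Your graph-theoretic setup is correct and is the right reduction: the run settles into the set $C$ of states visited infinitely often, $C$ is strongly connected and closed sets of this kind are exactly the recurrent components, and determinism turns the assumption of a leaving transition $p \xrightarrow{a} r$ with $r \notin C$ into the constraint that, from some position on, the symbol $a$ never follows a visit to $p$. (The paper itself gives no proof of this lemma; it only cites Satz~2.5 of Schnorr and Stimm.) The problem is the final step, which you explicitly leave unproven: the two facts you invoke --- that $p$ is visited with positive lower frequency, and that each input symbol follows a visit to $p$ with positive relative frequency --- are not side conditions but the entire content of the lemma. The second is an Agafonov-type selection statement (select the symbol following each visit to $p$, an oblivious regular selection rule) and is not ``forced'' by normality plus prefix-dependence in any routine way. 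Moreover, the natural tools for proving it within this paper, Lemmas~\ref{lem:statefreq} and~\ref{lem:runfreq}, require the automaton to be strongly connected \emph{and complete}; applied to the restriction to $C$ they presuppose that $C$ is closed, which is exactly what is being proved, so the obvious way to fill your gap is circular. Note also that weaker density bounds cannot rescue the argument: knowing only that visits to $p$ have upper density at least $1/\#Q$ and that the blocks avoiding $a$ at the right offset have density $1-1/\#A$ does not force the relevant position sets to intersect in the way you need, so some exact conditional equidistribution (or a different mechanism altogether) is genuinely required.

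If you want a self-contained finish in the spirit of the cited reference, the cleanest route is the martingale characterization of normality due to Schnorr and Stimm: define a finite-state martingale that bets nothing except when the automaton is in state $p$, where it moves a small amount $\eta$ of mass away from the symbol $a$ onto the other symbols. Each of the finitely many early visits to $p$ costs at most a fixed positive factor, while every later visit multiplies the capital by $1+\#A\eta/(\#A-1)$; since $p$ is visited infinitely often the capital diverges, and a normal word admits no successful finite-state martingale. This replaces your two unproven frequency claims by a single appeal to the main theorem of the same paper from which the lemma is quoted.
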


By Proposition~\ref{prp:decomposition} it suffices to analyze preservation
of normality in each recurrent strongly connected component.  In what
follows we mainly consider strongly connected transducers.  If all
transitions have an empty output label, the output of any run is empty and
the transducer does not preserve normality.  Therefore, we assume that
transducers have at least one transition with a non empty output label.  By
Lemmas \ref{lem:statefreq} and~\ref{lem:runfreq}, this transition is
visited infinitely often if the input is normal because all entries of the
stationary distribution are positive \cite[Thm~1.1(b)]{Senata06}.  This
guarantees that each normal word~$x$ is the input label of an accepting run
and that $\mathcal{T}(x)$ is well-defined.

Some frequencies are obtained as stationary distributions of Markov chains
\cite[Thm~4.1]{Bremaud08}.  For that purpose, we associate a Markov chain
$\mathcal{M}$ to each strongly connected automaton~$\mathcal{A}$.  For
simplicity, we assume that the state set~$Q$ of~$\mathcal{A}$ is the set
$\{1, \ldots,\#Q\}$.  The state set of the Markov chain is the same set
$\{1, \ldots,\#Q\}$.  The transition matrix of the Markov chain is the
matrix $P = (p_{i,j})_{1\le i,j \le \#Q}$ where each entry~$p_{i,j}$ is
equal to $\#\{ a: i \trans{a} j\}/\#A$.  Note that $\#\{ a: i \trans{a}
j\}$ is the number of transitions from~$i$ to~$j$.  Since the automaton is
assumed to be deterministic and complete, the matrix~$P$ is stochastic.  If
the automaton~$\mathcal{A}$ is strongly connected, the Markov chain is
irreducible and it has therefore a unique stationary distribution~$\pi$
such that $\pi P = \pi$.  Note that all entries of $P$ and~$\pi$ are
rational numbers which can be effectively computed from~$\mathcal{A}$. The
vector~$\pi$ is called the \emph{distribution} of~$\mathcal{A}$.  This
definition as well as Lemmas \ref{lem:statefreq} and~\ref{lem:runfreq}
below also apply to input-deterministic transducers by ignoring the output
labels.

Each run of either an automaton or a transducer can be seen as a sequence
of transitions.  Therefore, the notion of the \emph{frequency} $\freq(\rho,
\gamma)$ of a finite run~$\gamma$ in an infinite run~$\rho$ is defined as
in Section~\ref{sec:basic}.  Note that $\freq(\rho, \gamma)$ is a limit and
might not exist.  This notion applies also to states seen as runs of
length~$0$.

The following lemma which is Lemma~4.5 in~\cite{Schnorr71} states that
if the automaton~$\mathcal{A}$ is strongly connected, then the run on
a normal input visits each state with a frequency.  Moreover, these
frequencies are independent of the input as long as it is normal.

\begin{lemma} \label{lem:statefreq}
  Let $\mathcal{A}$ be a deterministic and complete automaton which is
  strongly connected and let $\rho$ be a run in~$\mathcal{A}$ labeled by
  a normal word.  Then the frequency $\freq(\rho,q)$ is equal to~$\pi_q$
  for each state~$q$ where $\pi$ is the stationary distribution of the
  Markov chain associated to~$\mathcal{A}$.
\end{lemma}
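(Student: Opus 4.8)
The plan is to show that the vector $f=(f_q)_q$ of visit frequencies exists and coincides with the stationary distribution $\pi$ by proving that every limit point of the empirical frequencies satisfies the balance equation $f=fP$; since $P$ is irreducible, this equation together with $\sum_q f_q=1$ determines $f=\pi$ uniquely, and uniqueness then forces the whole sequence of empirical frequencies to converge. Throughout I write the run as $\rho = r_0 \trans{a_1} r_1 \trans{a_2} \cdots$ with input $x=a_1a_2\cdots$, and I use that the automaton is deterministic, so $r_m = \delta(r_0, x[1\dots m])$ is a function of the prefix; hence the set of positions at which $\rho$ visits a given state is selected by a regular prefix condition.

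First I would set up the balance equation by counting one-step transitions. For states $i,j$ let $N_{i,j}(n)$ be the number of positions $m<n$ with $r_m=i$ and $r_{m+1}=j$. Every visit to $j$ (after the first) is the right endpoint of exactly one such transition, so the column sums satisfy $\sum_i N_{i,j}(n) = \#\{m\le n : r_m=j\} + O(1)$, while the row sums satisfy $\sum_j N_{i,j}(n) = \#\{m<n : r_m=i\} + O(1)$. If I can show $N_{i,j}(n)/n \to f_i\,p_{i,j}$, then dividing the column identity by $n$ and letting $n\to\infty$ yields $f_j = \sum_i f_i p_{i,j}$, i.e.\ $f=fP$, which is exactly what I want.

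The crux is therefore the \emph{conditional equidistribution} claim: among the positions $m$ at which $\rho$ is in state $i$, the next input symbol $a_{m+1}$ is equidistributed over $A$, so that the fraction of $i$-visits proceeding to $j$ tends to $\#\{a:\delta(i,a)=j\}/\#A = p_{i,j}$. This is precisely the statement that selecting positions by the regular condition ``$x[1\dots m]\in L_i$'', where $L_i = \{w : \delta(r_0,w)=i\}$, does not bias the following symbol --- a special case of the phenomenon behind Agafonov's theorem. I would prove it by a block-counting argument: group positions by the block $u = x[m-k+1\dots m]$ of the preceding $k$ symbols, so that $r_m = \delta(r_{m-k}, u)$ depends on $u$ and on the entry state $r_{m-k}$. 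Normality assigns each $u\in A^k$ and each extension $ua$ its uniform frequency, which controls the contribution of $u$, while the dependence on the finitely many possible entry states is removed by letting $k$ grow and using that the $k$-step transition matrix $P^k$ converges in Ces\`aro mean to the rank-one matrix all of whose rows equal $\pi$, so that $\delta(\,\cdot\,,u)$ sends a frequency-typical long block to the distribution $\pi$ independently of where the reading started.

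I expect this last decoupling to be the main obstacle. The marginal counts alone --- how often state $i$ occurs and how often a block $u$ occurs --- do not determine the joint count of ``state $i$ followed by block $u$'', precisely because a priori the entry state and the following block could be correlated; normality is exactly what rules out such correlation in the limit, and extracting this requires the averaging over long blocks described above. A secondary technical point is periodicity of the Markov chain: $P^k$ need not converge when the chain is periodic, so I would work with Ces\`aro averages over the block length (or pass to $P^d$ for the period $d$), which suffices because only frequencies, not pointwise limits, are needed. Once conditional equidistribution is established, the balance equation together with the uniqueness of the stationary distribution of an irreducible chain finishes the proof.
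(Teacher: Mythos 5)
The paper does not actually prove this lemma: it is imported verbatim as Satz/Lemma~4.5 of Schnorr and Stimm \cite{Schnorr71}, so there is no internal proof to compare against. Your outline --- every limit point $f$ of the empirical state distribution satisfies $f=fP$, hence equals $\pi$ by irreducibility and normalization, hence the full sequence converges --- is the natural one, and you correctly locate the entire difficulty in the ``conditional equidistribution'' step. The problem is that your proposed derivation of that step does not work, and the gap is not merely technical.

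Normality only supplies the \emph{marginal} frequencies of the blocks $u\in A^k$, whereas your argument needs the \emph{joint} limiting frequency $g_{i,u}$ of the event ``the run is in state $i$ and the adjacent $k$-block is $u$''. Letting $k$ grow and invoking Ces\`aro convergence of $P^k$ to the rank-one matrix does not decouple the two: that convergence describes what happens when $u$ is drawn uniformly \emph{independently} of the entry state, which is precisely what is in question. Concretely, take $A=\{a,b\}$, $Q=\mathbb{Z}/2\mathbb{Z}$ and $\delta(i,c)=i+[c=b]\bmod 2$, so the state records the parity of the number of $b$'s read so far and $\pi=(1/2,1/2)$. For every $k$, the constraints your argument actually extracts from normality --- row sums $\sum_u g_{i,u}=f_i$, column sums $\sum_i g_{i,u}=(\#A)^{-k}$, and the push-forward identity $f_j=\sum_{(i,u):\,\delta(i,u)=j}g_{i,u}$ --- are simultaneously satisfiable with $f=(3/4,1/4)$: give the even-parity state weight $2^{-k}$ on each even-parity block and $2^{-k-1}$ on each odd-parity block. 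So no amount of block-lengthening shrinks the feasible set down to $\{\pi\}$, and some genuinely different input is required. In this particular example one can finish with a second-moment (van der Corput) argument, because $(-1)^{r_{m+k}}(-1)^{r_m}$ depends only on $x[m+1\dots m+k]$ and normality then kills every autocorrelation of the state sequence; but that trick exploits the group structure of the example and does not extend to general deterministic automata. This is why the statement is a real theorem of Schnorr and Stimm rather than a routine consequence of block counting, and also why appealing to ``the phenomenon behind Agafonov's theorem'' is circular here: Agafonov-type selection results are normally proved \emph{from} this lemma, not the other way around.
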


Let $\gamma$ be a finite run whose first state is~$p$ and let $\rho$ be an
infinite run.  We call \emph{conditional frequency} of~$\gamma$ in~$\rho$
the ratio $\freq(\rho,\gamma)/\freq(\rho,p)$.  It is defined as soon as
both frequencies $\freq(\rho,\gamma)$ and $\freq(\rho,p)$ do exist.

\begin{lemma} \label{lem:runfreq}
  Let $\mathcal{A}$ be a deterministic and complete automaton which is
  strongly connected and let $\rho$ be a run in~$\mathcal{A}$ labeled by
  a normal word.  The conditional frequency of each run of length~$k$ is
  $1/(\#A)^k$.
\end{lemma}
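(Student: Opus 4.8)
The plan is to prove, by induction on~$k$, the stronger statement that $\freq(\rho,\gamma)=\pi_p/(\#A)^k$ for every run~$\gamma$ of length~$k$ starting at a state~$p$, where $\pi$ is the stationary distribution supplied by Lemma~\ref{lem:statefreq}; dividing by $\freq(\rho,p)=\pi_p$ then yields the conditional frequency $1/(\#A)^k$. Write $s_i$ for the state reached after reading the first $i$ symbols of the normal input~$x$. Since $\mathcal A$ is deterministic and complete, for each word $w\in A^k$ there is a unique run $\gamma_w$ of length~$k$ from~$p$ with input label~$w$, and $\gamma_w$ occurs at position $i+1$ of~$\rho$ exactly when $s_i=p$ and $x_{i+1}\cdots x_{i+k}=w$. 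The base case $k=0$ is immediate: $\gamma_\lambda$ is the state~$p$ seen as a run of length~$0$, so $\freq(\rho,\gamma_\lambda)=\freq(\rho,p)=\pi_p$. Note that $\pi_p>0$ because $\mathcal A$ is strongly connected, so each $\gamma_w$ is visited infinitely often and its frequency is positive throughout the induction.

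The inductive step is where the real content lies, and I would phrase it as a selection. Suppose the claim holds for $w\in A^{k}$; I extend $\gamma_w$ by one transition to $\gamma_{wa}$ for a symbol~$a$. The positions $j$ that terminate an occurrence of $\gamma_w$ are precisely those whose prefix $x_1\cdots x_j$ lies in the regular language $L_p\cdot\{w\}$, where $L_p$ is the set of words driving $\mathcal A$ from $q_0$ to~$p$; by determinism this prefix condition captures exactly ``$s_{j-k}=p$ and the last $k$ symbols equal $w$''. Selecting the following symbol $x_{j+1}$ at these positions is an oblivious selection by a regular set, and the selected positions whose selected symbol is~$a$ are in bijection with the occurrences of $\gamma_{wa}$. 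Consequently the frequency of~$a$ in the selected subsequence equals $\freq(\rho,\gamma_{wa})/\freq(\rho,\gamma_w)$.

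I would then invoke Agafonov's theorem~\cite{Agafonov68}: oblivious selection by a regular set preserves normality. Since $\freq(\rho,\gamma_w)\ge\pi_p/(\#A)^k>0$, the selected subsequence is infinite, hence normal, so each symbol occurs in it with frequency $1/\#A$; this gives $\freq(\rho,\gamma_{wa})=\freq(\rho,\gamma_w)/\#A$ and closes the induction. The main obstacle is exactly this reliance on a selection principle: one must recognize the event ``visit~$p$ then read~$w$'' as a regular prefix-language and transfer the preservation statement into the per-step conditional-frequency identity, while ensuring the selected subsequence is infinite (guaranteed by $\pi_p>0$ via Lemma~\ref{lem:statefreq}). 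If instead a self-contained argument avoiding Agafonov is preferred, I would count occurrences of $\gamma_w$ through windows $x_{i-m+1}\cdots x_{i+k}$ and split the count according to the state $s_{i-m}$ reached $m$ steps earlier; normality fixes the frequency of every length-$(m+k)$ block and Lemma~\ref{lem:statefreq} fixes the densities of the far-back states, so the crux reduces to showing that the correlation between the far-back state and the following block tends to~$0$ as $m\to\infty$. A plain finite-window approximation of the event $\{s_i=p\}$ does not suffice here (a two-state parity automaton, whose current state depends on the entire prefix, shows this), and controlling this residual correlation uniformly in~$w$ is precisely where normality must be used beyond the state-frequency statement.
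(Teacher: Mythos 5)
Your argument is correct, but it takes a genuinely different route from the paper. The paper does not induct on the run length: it builds, for each $k$, the ``snake'' automaton $\automaton^k$ whose states are the runs $p*w$ of length~$k$ of~$\automaton$, checks that its stationary distribution is $\xi_{p*w}=\pi_p/(\sizeSet{A})^k$, and then applies Lemma~\ref{lem:statefreq} to~$\automaton^k$ directly; the whole statement for all runs of length~$k$ falls out at once from the state-frequency lemma applied to a derived automaton. You instead prove the one-step recurrence $\freq(\rho,\gamma_{wa})=\freq(\rho,\gamma_w)/\sizeSet{A}$ by recognizing the positions ending an occurrence of~$\gamma_w$ as those whose prefix lies in the regular language $L_p\cdot\{w\}$ and invoking Agafonov's theorem~\cite{Agafonov68} on oblivious selection. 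This is logically sound --- Agafonov's theorem is an independent prior result, you correctly reduce ``occurrence of the run $\gamma_w$'' to a regular prefix condition via determinism, and you verify $\pi_p>0$ so the selected subsequence is infinite and the frequency ratios are well defined --- but it imports a substantially deeper external theorem than the paper needs: the paper gets by with only Lemma~\ref{lem:statefreq} (Schnorr--Stimm) plus a routine stationary-distribution computation, whereas Agafonov's theorem is itself usually proved with at least that much machinery. Your closing remarks about the difficulty of a window-based argument avoiding any selection or product construction are apt; the snake automaton is precisely the device the paper uses to sidestep that correlation issue.
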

\begin{proof}
  Let $p * w$ denote the unique run starting in state~$p$ with label~$w$.
  We now define an automaton whose states are the runs of length~$n$
  in~$\automaton$.  We let $\automaton^n$ denote the automaton whose state
  set is $\{ p * w : p \in Q, w \in A^n\}$ and whose set of transitions is
  defined by
  \begin{displaymath}
    \left\{
      {(p * bw) \textstyle\trans{a} (q * wa)} :
      {p \textstyle\trans{b} q}\text{ in $\automaton$}, \;\; a,b \in A
      \text{ and }w \in A^{n-1}
    \right\}
  \end{displaymath}
  The Markov chain associated with the automaton~$\automaton^n$ is called
  the \emph{snake} Markov chain.  See Problems 2.2.4, 2.4.6 and 2.5.2
  (page~90) in \cite{Bremaud08} for more details.  It is pure routine to
  check that the stationary distribution~$\xi$ of~$\automaton^n$ is given
  by $\xi_{p * w} = \pi_p/(\sizeSet{A})^n$ for each state~$p$ and each
  word~$w$ of length~$n$ and where $\pi$ is the stationary distribution
  of~$\mathcal{A}$.  To prove the statement, apply
  Lemma~\ref{lem:statefreq} to the automaton~$\automaton^n$.
\end{proof}

The output labels of the transitions in~$\mathcal{T}$ may have arbitrary
lengths.  We first describe the construction of an equivalent
transducer~$\mathcal{T}'$ such that all output labels in~$\mathcal{T}'$
have length at most~$1$. We call this transformation \emph{normalization}
and it consists in \emph{replacing} each transition $ p \trans{a | v} q$
in~$\transducer$ such that $\size{v} \geq 2$ by $n$ transitions:
\begin{displaymath}
   p \trans{a | b_1} q_1
     \trans{\lambda | b_2} q_2
     \cdots
     q_{n-1} \trans{\lambda | b_n} q
\end{displaymath}
where $q_1, q_2, \ldots, q_{n-1}$ are new states and $v = b_1\cdots b_n$.
We refer to $p$ as the parent of $q_1, \cdots, q_{n-1}$.

To illustrate the construction, the normalized transducer obtained from
the transducer of Figure~\ref{fig:transducer} is pictured in
Figure~\ref{fig:normalized}.

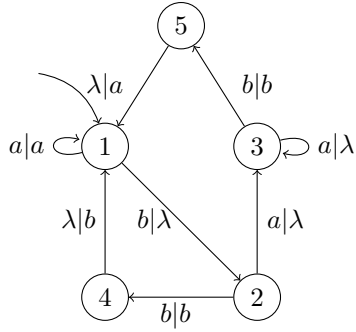
\begin{figure}[htbp]
  \begin{center}
  \begin{tikzpicture}
  \node (qinit) at (-1,1) {} ;
  \node[style={circle,draw}] (q1) at (0,0) {$1$} ;
  \node[style={circle,draw}] (q2) at (2,-2) {$2$} ;
  \node[style={circle,draw}] (q3) at (2,0) {$3$} ;
  \node[style={circle,draw}] (q4) at (0,-2) {$4$} ;
  \node[style={circle,draw}] (q5) at (1,1.6) {$5$} ;
  \path [->] (qinit) edge [bend left] (q1) ;
  \path [->, loop left] (q1) edge node {$a | a$} (q1) ;
  \path [->]             (q1) edge [left] node {\hspace{.5cm}\raisebox{0cm}{$b | \lambda$}} (q2) ;
  \path [->]             (q2) edge node {\hspace{.75cm}{$a | \lambda$}} (q3) ;
  \path [->]             (q2) edge [left] node[xshift=7pt] {\raisebox{-.75cm}{$b | b$}} (q4) ;
  \path [->]             (q4) edge [left] node {\hspace{0.0cm}\raisebox{0cm}{$\lambda | b$}} (q1) ;
  \path [->, loop right] (q3) edge node {$a | \lambda$} (q3) ;
  \path [->]             (q3) edge node {\hspace{1.0cm}{$b | b$}} (q5) ;
  \path [->]             (q5) edge node {\hspace{-1.0cm}{$\lambda | a$}} (q1) ;
  \end{tikzpicture}
  \end{center}
  \caption{The normalized transducer of the transducer pictured
           in Fig.~\ref{fig:transducer}}
  \label{fig:normalized}
\end{figure}

From the normalized transducer~$\transducer'$ we construct a weighted
automaton $\automaton$ with the same state set as $\transducer'$.  We
define transitions between every pair of states $p,q$ for each symbol~$b$
in~$B$, that is the transition $p \trans{b} q$ is defined for all states
$p,q$ and for every symbol $b\in B$.  To assign weight to transitions in
$\automaton$, we first assign weights to transitions in~$\transducer'$ as
follows.  Each transition starting from a state in~$\mathcal{T}$ (and having
a symbol as input label) has weight $1/{\sizeSet{A}}$ and each transition
starting from a newly added state (and having the empty word as input
label) has weight~$1$.  Note that for each state~$p$ in~$\transducer'$, the
sum of weights of transitions starting from~$p$ is~$1$. We now consider
separately transitions that generate empty output from those that do not.
Consider the $Q \times Q$ matrix $E$ whose $(p,q)$-entry is given for each
pair $(p,q)$ of states by
\begin{displaymath}
  E_{p,q} = \sum_{a\in A} \weight_{\transducer'}(p \trans{a|\lambda} q).
\end{displaymath}
Let $E^*$ be the matrix defined by $E^* = \sum_{k \ge 0} E^k$.  Hence the
entry $E^*_{p,q}$ is the sum of weights of all finite runs with empty
output going from~$p$ to~$q$.  The matrix~$E^*$ can be computed because it
is the solution of the system $E^* = EE^* + I$ where $I$ is the identity
matrix.  This proves in particular that all its entries are rational
numbers.

For each symbol $b\in B$ consider the $Q \times Q$ matrix $N_b$ whose
$(p,q)$-entry is given for each pair $(p,q)$ of states by
\begin{displaymath}
  (N_b)_{p,q} = \sum_{a \in A \cup \{\lambda\}}
  \weight_{\transducer'}(p \trans{a | b} q).
\end{displaymath}
We define the weight of a transition $p \trans{b} q$ in~$\mathcal{A}$ as
\begin{displaymath}
  \weight_\automaton (p \trans{b} q) = (E^*N_b)_{p,q}.
\end{displaymath}
To assign initial weight to states we consider the Markov chains
$\mathcal{M}$ whose transition matrix is the stochastic matrix $P = \sum_{b
  \in B} {E^*N_b}$. The fact that this matrix is indeed stochastic follows
from the observation that, for each state~$p$, the set of input labels of
the runs in $\bigcup_{q\in Q, b\in B}{\Gamma_{p,b,q}}$ (see below for the
definition of $\Gamma_{p,b,q}$) is a maximal prefix code and Proposition~3.8
in \cite[Chap.~II.3]{BerstelPerrin85}.  The initial vector of~$\mathcal{A}$
is the stationary distribution~$\pi$ of~$\mathcal{M}$, that is, the line
vector~$\pi$ such that $\pi P = \pi$.   We assign to each state~$q$ the
initial weight $\pi_q$.  Finally we assign final weight~$1$ to all states.

We give below the matrices $E$, $E^*$, $N_a$, $N_b$ and $P$ and the initial
vector~$\pi$ of the weighted automaton obtained from the transducer
pictured in Figure~\ref{fig:normalized}.

\begin{alignat*}{2}
  E & = 
  \begin{bmatrix} 
      0 & 1/2 & 0 & 0 & 0 \\
      0 & 0 & 1/2 & 0 & 0 \\
      0 & 0 & 1/2 & 0 & 0 \\
      0 & 0 & 0 & 0 & 0 \\
      0 & 0 & 0 & 0 & 0
  \end{bmatrix}
  &\quad\text{and}\quad
  E^* &= 
  \begin{bmatrix}
      1 & 1/2 & 1/2 & 0 & 0 \\
      0 & 1 & 1 & 0 & 0 \\
      0 & 0 & 2 & 0 & 0 \\
      0 & 0 & 0 & 1 & 0 \\
      0 & 0 & 0 & 0 & 1
  \end{bmatrix} \\
  N_a & = 
  \begin{bmatrix}
      1/2 & 0 & 0 & 0 & 0 \\
      0 & 0 & 0 & 0 & 0 \\
      0 & 0 & 0 & 0 & 0 \\
      0 & 0 & 0 & 0 & 0 \\
      1 & 0 & 0 & 0 & 0
  \end{bmatrix}
  &\quad\text{and}\quad
  N_b & = 
  \begin{bmatrix}
      0 & 0 & 0 & 0 & 0 \\
      0 & 0 & 0 & 0 & 1/2 \\
      0 & 0 & 0 & 1/2 & 0 \\
      1 & 0 & 0 & 0 & 0 \\
      0 & 0 & 0 & 0 & 0
  \end{bmatrix} \\
  P & = 
  \begin{bmatrix}
      1/2 & 0 & 0 & 1/4 & 1/4 \\
      0 & 0 & 0 & 1/2 & 1/2 \\
      0 & 0 & 0 & 0 & 1 \\
      1 & 0 & 0 & 0 & 0 \\
      1 & 0 & 0 & 0 & 0
  \end{bmatrix}
  &\quad\text{and}\quad
  \pi &= 
  \begin{bmatrix}
      2/3 & 0 & 0 & 1/6 & 1/6 
  \end{bmatrix}
\end{alignat*}

\begin{proposition} \label{prp:automatonCorrect}
  The automaton $\automaton$ computes frequencies, that is, for every normal
  word $x$ and any finite word $w$ in~$B^*$,
  $\weight_{\automaton}(w) = \freq(\mathcal{T}(x),w)$.
\end{proposition}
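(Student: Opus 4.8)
The plan is to unfold $\weight_{\automaton}(w)$ into a product of transition matrices and to recognise this product as the limiting frequency, over the output, of the block~$w$. Fix $w = b_1 \cdots b_k \in B^+$ (the case $w=\lambda$ is trivial, as then $\weight_{\automaton}(\lambda)=\pi\,\mathbf{1}=1$). Because every state has initial weight $\pi_p$ and final weight~$1$, and the transition $p \trans{b} q$ has weight $(E^* N_b)_{p,q}$, summing over all runs labelled~$w$ gives
\begin{displaymath}
  \weight_{\automaton}(w) = \pi\, (E^* N_{b_1})(E^* N_{b_2}) \cdots (E^* N_{b_k})\, \mathbf{1},
\end{displaymath}
where $\mathbf{1}$ is the all-ones column vector. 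First I would record the combinatorial meaning of each factor: by definition of $E$, $E^*$ and $N_b$, the entry $(E^* N_b)_{p,q}$ equals $\sum_{\gamma \in \Gamma_{p,b,q}} \weight_{\transducer'}(\gamma)$, the total weight of the runs of~$\transducer'$ that go from~$p$ to~$q$, output exactly~$b$, and consist of a (possibly empty) block of empty-output transitions followed by one output-producing transition. Moreover each such weight equals $1/(\sizeSet{A})^{\size{c}}$, where~$c$ is the input label read by~$\gamma$, since every input-reading transition of~$\transducer'$ carries weight $1/\sizeSet{A}$ and every $\lambda$-input transition carries weight~$1$. I shall call such a~$\gamma$ an \emph{output step}.

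Next I would give the dynamical interpretation. Let $\rho'$ be the run of~$\transducer'$ on the normal word~$x$; it is obtained from the run~$\rho$ of~$\transducer$ on~$x$ by expanding every transition with output length $\ge 2$. Cutting $\rho'$ at the checkpoint preceding each output step decomposes it into a sequence of output steps, the $i$-th of which starts at a state~$\sigma_i$ and emits a single symbol~$y_i$. Since each output step produces exactly one output symbol, the $i$-th symbol of~$\transducer(x)$ is precisely~$y_i$, and hence
\begin{displaymath}
  \freq(\mathcal{T}(x),w) = \lim_{M \to \infty} \frac{\#\{\, i \le M : y_i y_{i+1} \cdots y_{i+k-1} = w \,\}}{M},
\end{displaymath}
where~$M$ ranges over the number of completed output steps, which coincides with the current output length. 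It therefore suffices to prove, for every sequence of states $q_0, q_1, \ldots, q_k$, that the limiting frequency over~$i$ of the event $\sigma_i = q_0$ and $(y_{i+j-1},\sigma_{i+j})=(b_j,q_j)$ for $1\le j\le k$ equals $\pi_{q_0}\prod_{j=1}^{k}(E^* N_{b_j})_{q_{j-1},q_j}$; summing these identities over $q_0,\ldots,q_k$ telescopes into the matrix product above, namely $\weight_{\automaton}(w)$.

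The heart of the argument is to obtain these block frequencies from Lemmas~\ref{lem:statefreq} and~\ref{lem:runfreq}. A block of $k$ consecutive output steps with prescribed checkpoints and outputs corresponds to a finite factor of~$\rho$ that reads a determined input word~$c$ (the concatenation of the codewords read by the steps) and traverses the prescribed states; it is a run of~$\transducer$ of length~$\size{c}$, whose conditional frequency is $1/(\sizeSet{A})^{\size{c}}$ by Lemma~\ref{lem:runfreq} and whose initial-state frequency is given by the stationary distribution through Lemma~\ref{lem:statefreq}. Summing over the admissible codewords and internal states reconstitutes the entries $(E^* N_{b_j})_{q_{j-1},q_j}$. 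It then remains to pass from these per-input-symbol frequencies to frequencies per output step, and here the maximal-prefix-code property of the input labels of the output steps leaving a fixed state is decisive: by the Kraft identity it gives $\sum_{q,b}(E^* N_b)_{p,q}=1$, so $P=\sum_b E^* N_b$ is stochastic and admits the stationary distribution~$\pi$, and this stationarity is exactly what makes the reindexing consistent and produces the factor~$\pi_{q_0}$. The main obstacle is precisely this change of index between input position, where Lemma~\ref{lem:runfreq} supplies frequencies, and output-step position, where the weighted automaton operates: one must control the variable number of input symbols consumed per step (governed by the prefix code) and the fact that the newly added states may themselves occur as checkpoints, so that frequencies have to be tracked at positions interior to the expanded transitions before being reassembled into the stationary law~$\pi$.
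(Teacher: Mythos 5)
Your argument follows essentially the same route as the paper's: decompose the run of $\mathcal{T}'$ into output steps drawn from the sets $\Gamma_{p,b,q}$, identify $(E^*N_b)_{p,q}$ with the conditional frequency of $\Gamma_{p,b,q}$ via Lemmas~\ref{lem:statefreq} and~\ref{lem:runfreq}, and recover the initial weights as the stationary distribution of $P=\sum_{b}E^*N_b$ governing the checkpoint sequence. The one ingredient you gesture at without supplying --- the existence of the limiting frequency of each checkpoint state under the reindexing from input positions to output steps --- is precisely what the paper isolates in a separate lemma, proved by splitting each state $q$ into $q^{\lambda}$ and $q^{o}$ according to whether the incoming transition has empty output.
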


The proof of the proposition requires some preliminary results.  The
following lemma is straightforward and follows directly from the
normalization of~$\mathcal{T}$ into~$\mathcal{T}'$.
\begin{lemma} \label{lem:TequivTNorm}
  Both transducers $\transducer$ and $\transducer'$ realize the same
  function, that is, $\transducer(x) = \transducer'(x)$ for any infinite
  word~$x$.
\end{lemma}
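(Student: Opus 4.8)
The plan is to exhibit a bijection between the runs of~$\transducer$ and those of~$\transducer'$ that preserves both input and output labels. The lemma then follows immediately: by definition $\transducer(x)$ (respectively $\transducer'(x)$) is the output label of the unique accepting run of~$\transducer$ (respectively~$\transducer'$) whose input label is~$x$, so a label-preserving bijection on runs makes the two partial functions coincide, with $\transducer(x)$ defined exactly when $\transducer'(x)$ is.

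First I would describe the forward map. Normalization acts transition by transition: a transition $p \trans{a|v} q$ with $\size{v} \le 1$ is left untouched, whereas one with $v = b_1 \cdots b_n$ and $n \ge 2$ is replaced by the chain
\begin{displaymath}
  p \trans{a|b_1} q_1 \trans{\lambda|b_2} q_2 \cdots q_{n-1} \trans{\lambda|b_n} q .
\end{displaymath}
Given any run of~$\transducer$, I replace each of its transitions by the corresponding single transition or chain in~$\transducer'$. The result is a valid run of~$\transducer'$; its input label is unchanged because the added transitions carry the empty input~$\lambda$ and consume no symbol, and its output label is unchanged because the added transitions output $b_1,\ldots,b_n$ in order, whose concatenation is exactly~$v$.

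For the inverse map I would exploit the special structure of the newly added states. Each such state~$q_i$ has a single outgoing transition $q_i \trans{\lambda|b_{i+1}} q_{i+1}$ (with $q_n = q$), so it lies on a finite deterministic chain that can only be traversed to its end at an original state of~$\transducer$. Hence in any run of~$\transducer'$ the added states are transient and occur only inside complete chains sitting between consecutive original states; contracting each such chain back to the single transition $p \trans{a|v} q$ from which it arose yields a well-defined run of~$\transducer$ with the same input and output labels. By construction the two maps are mutually inverse.

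The argument has no genuine obstacle; the single point deserving care is that an infinite run of~$\transducer'$ cannot remain forever among the added states, which is guaranteed because each chain has finite length~$n-1$ and leads deterministically back to an original state. Thus infinite runs correspond to infinite runs, and since corresponding runs carry equal output labels, an accepting run—one whose output is infinite and whose first state is the initial state~$q_0$, an original state common to both transducers—of either transducer maps to an accepting run of the other. Therefore $\transducer(x) = \transducer'(x)$ for every infinite word~$x$.
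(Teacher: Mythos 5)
Your proposal is correct and matches the paper's intent: the paper simply declares the lemma ``straightforward'' as a direct consequence of the normalization, and the label-preserving correspondence between runs that you spell out is exactly that implicit argument. Writing out the bijection and noting that added states cannot trap an infinite run is a sound (if more detailed) rendering of the same idea.
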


Let us recall that a set of words~$L$ (resp. runs) is called
\emph{prefix-free} if no word in~$L$ is a proper prefix of another word
in~$L$. Let $\Gamma$ be a set of finite runs and $\rho$ be an infinite run.
The limit $\freq(\rho, \Gamma)$ is defined as the limit as $n$ goes to
infinity of the ratio between the number of occurrences of runs in~$\Gamma$
in the prefix of length~$n$ of~$\rho$ and~$n$.  If $\Gamma$ is prefix-free
(not to count twice the same start), the following equality holds
\begin{displaymath}
  \freq(\rho, \Gamma) = \sum_{\gamma\in\Gamma}{\freq(\rho,\gamma)}
\end{displaymath}
assuming that each term of the right-hand sum does exist.  If $\Gamma$ is a
set of finite runs starting from the same state~$p$, the \emph{conditional
  frequency} of~$\Gamma$ in a run~$\rho$ is defined as the ratio between
the frequency of~$\Gamma$ in~$\rho$ and the frequency of~$p$ in~$\rho$,
that is, $\freq(\rho, \Gamma)/\freq(\rho, p)$.  Furthermore if $\Gamma$ is
prefix-free, the conditional frequency of~$\Gamma$ is the sum of the
conditional frequencies of its elements.

Let $x$ be a fixed normal word and let $\rho$ and~$\rho'$ be respectively
the runs in~$\mathcal{T}$ and~$\mathcal{T}'$ with label~$x$.  By
Lemma~\ref{lem:statefreq}, the frequency $\freq(\rho, q)$ of each state~$q$
is $\pi_q$ where is the stationary distribution of~$\mathcal{T}$.  The
following lemma gives the frequency of states in~$\rho'$.

\begin{lemma} \label{lem:TTpfreq}
  There exists a constant~$C$ such that if $q$ is a state of~$\mathcal{T}$,
  then $\freq(\rho', q) = \freq(\rho, q)/C$ and if $q$ is newly created,
  then $\freq(\rho', q) = \freq(\rho', p)/\#A$ where $p$ is the parent
  of~$q$.
\end{lemma}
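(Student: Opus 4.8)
The plan is to track, side by side, the run~$\rho$ in~$\transducer$ and the run~$\rho'$ in~$\transducer'$ on the common input~$x$, and to compare occurrence counts in matching prefixes. By the normalization construction and Lemma~\ref{lem:TequivTNorm}, the two runs visit exactly the same original states in exactly the same order; the only difference is that each original transition $p \trans{a|v} q$ with $\size{v} = \ell \ge 2$ is replaced in~$\rho'$ by a chain $p \trans{a|b_1} q_1 \trans{\lambda|b_2} \cdots \trans{\lambda|b_\ell} q$ through the $\ell-1$ newly created states. I would fix~$n$, consider the prefix of~$\rho$ of length~$n$ (so $n$ input symbols are read), and let~$m$ be the length of the prefix of~$\rho'$ that reads the same~$n$ symbols. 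Then $m = n + \Lambda_n$, where~$\Lambda_n$ is the number of $\lambda$-transitions taken, and each traversal of a chain replacing a transition of output length~$\ell$ contributes $\ell-1$ to~$\Lambda_n$.

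First I would establish the constant~$C$. Every original transition has a frequency by Lemma~\ref{lem:runfreq}, so $\Lambda_n/n$ converges to $\sum (\ell-1)\,\freq(\rho, p\trans{a}q)$, summed over original transitions of output length $\ell\ge 2$; hence $m/n \to C$ for a constant $C \ge 1$ independent of the state. Since an original state~$q$ is visited in~$\rho'$ only at the endpoints of chains (and at the targets of unsplit transitions), its occurrence count in the length-$m$ prefix of~$\rho'$ equals its count in the length-$n$ prefix of~$\rho$ up to an additive error bounded by the longest chain. Dividing by~$m$ and factoring out~$n/m$ yields
\begin{displaymath}
  \freq(\rho', q)
    = \lim_{n\to\infty} \frac{\#\{\text{visits to } q\}}{m}
    = \Big(\lim_{n\to\infty} \frac{\#\{\text{visits to } q\}}{n}\Big)\cdot\lim_{n\to\infty}\frac{n}{m}
    = \freq(\rho, q)/C,
\end{displaymath}
which is the first assertion.

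For a newly created state~$q$ with parent~$p$, I would use that~$q$ lies on a unique chain, namely the one replacing a single original transition $p \trans{a|v} q'$ reading a single determined symbol~$a$. Hence, up to a bounded boundary error, each visit to~$q$ in~$\rho'$ is in bijection with a use of that transition in~$\rho$, that is, with an occurrence of the length-one run $p \trans{a} q'$. By Lemma~\ref{lem:runfreq} this run has conditional frequency $1/\#A$, so its frequency in~$\rho$ is $\freq(\rho, p)/\#A$. Dividing the occurrence count by~$m$ and invoking the first part gives
\begin{displaymath}
  \freq(\rho', q)
    = \frac{\freq(\rho, p)}{\#A}\cdot\frac{1}{C}
    = \frac{\freq(\rho, p)/C}{\#A}
    = \frac{\freq(\rho', p)}{\#A},
\end{displaymath}
which is the second assertion.

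The genuine work lies in the limiting arguments rather than in the algebra. The two points requiring care are the existence of the limit $m/n = C$ and the claim that the boundary discrepancies are negligible. Both follow from the fact that the inserted chains have length bounded by the maximal output length of a transition of~$\transducer$, so at any prefix the run~$\rho'$ is at most a bounded number of transitions from a chain endpoint; every additive error is therefore $O(1)$ and vanishes after division by $m \to \infty$. I expect the convergence of $\Lambda_n/n$ extracted from the per-transition frequencies to be the main obstacle, but this is precisely what Lemma~\ref{lem:runfreq} supplies.
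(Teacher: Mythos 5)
Your proof is correct and follows essentially the same route as the paper's (much terser) argument: the paper also obtains the first claim by observing that each transition $\tau$ of $\rho$ becomes $\max(1,\size{v_\tau})$ transitions of $\rho'$ and setting $C=\sum_{\tau}\freq(\rho,\tau)\cdot\max(1,\size{v_\tau})$, which coincides with your $1+\lim\Lambda_n/n$ since $\sum_\tau\freq(\rho,\tau)=1$, and it derives the second claim from the conditional frequency $1/\#A$ of single transitions exactly as you do. Your version merely makes explicit the prefix-length bookkeeping and the $O(1)$ boundary errors that the paper leaves implicit.
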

\begin{proof}
  Observe that there is a one-to-one relation between runs labeled with
  normal words in $\mathcal{T}$ and in $\mathcal{T}'$.  More precisely,
  each transition $\tau$ in~$\rho$ is replaced by $\max(1, |v_\tau|)$
  transitions in~$\rho'$ (where $v_\tau$ is the output label of~$\tau$).
  
  By combining Lemmas \ref{lem:statefreq} and~\ref{lem:runfreq}, each
  transition of~$\mathcal{T}$ has a frequency in~$\rho$.  The first result
  follows by taking
  $C = \sum_{\tau}{\freq(\rho, \tau) \cdot \max(1, |v_\tau|)}$ where the
  summation is taken over all transitions~$\tau$ of~$\mathcal{T}$ and
  $v_\tau$ is implicitly the output label of~$\tau$.  The second result
  follows from Lemma~\ref{lem:runfreq} stating that transitions have a
  conditional frequency of~$1/\#A$ in~$\rho$.
\end{proof}

For each pair $(p,q)$ of states and each symbol $b \in B$, consider the set
$\Gamma_{p,b,q}$ of runs from~$p$ to~$q$ in~$\mathcal{T}'$ that have empty
output labels for all their transitions but the last one, which has $b$ as
output label.
\begin{displaymath}
  \Gamma_{p,b,q} = \{ p \trans{a_1|\lambda} \cdots
                   \trans{a_n|\lambda} q_n \trans{a_{n+1}|b} q : n\ge 0, q_i
                   \in Q, a_i \in A \cup \{\lambda\}\} 
\end{displaymath}
and let $\Gamma$ be the union $\bigcup_{p,q\in Q, b \in
  B}{\Gamma_{p,b,q}}$.  Note that the set~$\Gamma$ is prefix-free.
Therefore, the run~$\rho'$ has a unique factorization $\rho = \gamma_0
\gamma_1 \gamma_2 \cdots$ where each $\gamma_i$ is a finite run in~$\Gamma$
and the ending state of~$\gamma_i$ is the starting state of~$\gamma_{i+1}$.
Let $(p_i)_{i\ge0}$ and $(b_i)_{i \ge 0}$ be respectively the sequence of
states and the sequence of symbols such that $\gamma_i$ belongs to
$\Gamma_{p_i,b_i,p_{i+1}}$ for each $i \ge 0$.  Let us call $\rho''$ the
sequence $p_0p_1p_2\cdots$ of states of~$\mathcal{T}'$.

\begin{lemma}
  For each state~$q$ of~$\mathcal{T}'$, the frequency $\freq(\rho'', q)$
  does exist.
\end{lemma}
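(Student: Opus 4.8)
The plan is to reduce the existence of $\freq(\rho'', q)$ to the existence of transition frequencies in $\rho'$. Observe that the states appearing in $\rho''$ are precisely those states of $\rho'$ located immediately after an output symbol is produced, together with the initial state $p_0$: indeed, for $i \ge 1$ the state $p_i$ is the target of the last, output-producing transition of the block $\gamma_{i-1}$. I therefore propose to count, among the first $n$ transitions of $\rho'$, the number $K(n)$ of transitions whose output label lies in $B$, and, for each state $q$, the number $D_q(n)$ of such transitions ending in $q$. Writing $\ell_m$ for the length of the prefix $\gamma_0 \cdots \gamma_{m-1}$ of $\rho'$, each block contributes exactly one output symbol, so $K(\ell_m) = m$ and $D_q(\ell_m) = \#\{1 \le i \le m : p_i = q\}$, which differs from $\#\{0 \le i < m : p_i = q\}$ by at most one. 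Consequently $\freq(\rho'', q) = \lim_m D_q(\ell_m)/K(\ell_m)$, and since $\ell_m \ge m$ tends to infinity, it suffices to show that $D_q(n)/K(n)$ converges as $n \to \infty$.

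The key preliminary step is to establish that every transition $\tau$ of $\mathcal{T}'$ has a frequency $\freq(\rho', \tau)$ in $\rho'$. This follows by the very argument used for states in Lemma~\ref{lem:TTpfreq}. A transition leaving a newly created state is the unique transition out of that state, so its count in $\rho'$ equals, up to a bounded error, the number of visits to that state; hence its frequency exists and coincides with the corresponding state frequency. A transition leaving an original state is the first transition of the chain replacing some transition $\hat\tau$ of $\mathcal{T}$, so its occurrences in $\rho'$ are in bijection, up to a bounded error, with the occurrences of $\hat\tau$ in $\rho$. Since $\hat\tau$ has a frequency in $\rho$ by Lemmas~\ref{lem:statefreq} and~\ref{lem:runfreq}, and since a prefix of $\rho$ of length $j$ corresponds to a prefix of $\rho'$ of length asymptotically $Cj$ (the constant $C$ of Lemma~\ref{lem:TTpfreq}), the frequency $\freq(\rho', \tau)$ exists as well.

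With transition frequencies in hand, write $O$ for the set of output-producing transitions of $\mathcal{T}'$ and $O_q \subseteq O$ for those ending in $q$. Then $D_q(n)/n \to \alpha_q := \sum_{\tau \in O_q} \freq(\rho', \tau)$ and $K(n)/n \to \mu := \sum_{\tau \in O} \freq(\rho', \tau)$, both being finite sums of existing limits. The denominator $\mu$ is strictly positive: by the standing assumption the transducer has a transition with non-empty output label, which is visited with positive frequency, so its chain contributes an output-producing transition of positive frequency to $\rho'$. Therefore $D_q(n)/K(n) \to \alpha_q/\mu$, and the limit $\freq(\rho'', q) = \alpha_q/\mu$ exists.

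The main obstacle I expect is essentially bookkeeping: carefully matching the block-count index $m$ with the transition-count index $n = \ell_m$, and verifying that the additive discrepancies (the $p_0$ versus $p_m$ boundary terms, and the partial chain possibly straddling the end of a length-$n$ prefix) are $O(1)$ and hence wash out after division by $m$. The positivity of $\mu$, required to pass from the two separate limits to the limit of their quotient, is the only point at which the hypothesis that some transition has non-empty output is invoked.
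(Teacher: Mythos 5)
Your proof is correct, and it reaches the conclusion by a noticeably different mechanism than the paper's. The paper constructs an auxiliary transducer $\hat{\mathcal{T}}$ in which each state $q$ is split into two states $q^\lambda$ and $q^o$ according to whether the incoming transition has empty or non-empty output, applies Lemma~\ref{lem:statefreq} to $\hat{\mathcal{T}}$ and a Lemma~\ref{lem:TTpfreq}-type argument to its normalization to obtain the frequencies of the states $q^o$ in the corresponding run $\hat\rho'$, and then observes that $\rho''$ is exactly the subsequence of $\hat\rho'$ obtained by keeping the occurrences of the states $q^o$. You instead stay inside $\mathcal{T}'$: you establish that every individual transition of $\mathcal{T}'$ has a frequency in $\rho'$ (by the same two-case argument used for states in Lemma~\ref{lem:TTpfreq}), and then express the density of occurrences of $q$ in $\rho''$ as $\alpha_q/\mu$ with $\alpha_q=\sum_{\tau\in O_q}\freq(\rho',\tau)$ and $\mu=\sum_{\tau\in O}\freq(\rho',\tau)$. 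The two computations agree in substance --- $\freq(\hat\rho',q^o)$ is precisely your $\alpha_q$ up to the common normalizing constant --- but your version avoids introducing a new machine at the cost of a small extra lemma on transition frequencies, and it has the merit of making explicit both the final division by $\mu$ and the fact that $\mu>0$ (which is where the standing assumption of at least one non-empty output label enters); the paper's proof leaves both points implicit in its closing sentence. The bookkeeping you flag (the $O(1)$ boundary terms and the passage from the subsequence $n=\ell_m$ to all $n$) is routine and of the same nature as what the paper already elides in the proof of Lemma~\ref{lem:TTpfreq}.
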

\begin{proof}
  The sequence~$\rho''$ is a subsequence of the sequence of states in the
  run~$\rho'$.  An occurrence of a state~$q$ in~$\rho'$ is removed whenever
  the output of the previous transition is empty. 
  
  Consider the transducer $\hat{\mathcal{T}}$ obtained by splitting each
  state $q$ of~$\mathcal{T}$ into two states $\qEmpty$ and $\qOutput$ in
  such a way that transitions with an empty output label end in a
  state~$\qEmpty$ and other transitions end in a state~$\qOutput$.  Then
  each transition transition $p \trans{a|v} q$ is replaced by either the
  two transitions $\pEmpty \trans{a|v} \qEmpty$ and $\pOutput \trans{a|v}
  \qEmpty$ if $v$ is empty or by the two transitions $\pEmpty \trans{a|v}
  \qOutput$ and $\pOutput \trans{a|v} \qOutput$ otherwise.  The
  state~$q_0^\lambda$ becomes the new initial state and non reachable
  states are removed.  Let $\hat{\rho}$ be the run in~$\hat{\mathcal{T}}$
  labeled by~$x$.  By Lemma~\ref{lem:statefreq}, the frequencies
  $\freq(\hat{\rho}, \qEmpty)$ and $\freq(\hat{\rho}, \qOutput)$ do exist.
  Now consider the normalization $\hat{\mathcal{T}}'$
  of~$\hat{\mathcal{T}}$ and the run $\hat{\rho}'$ in~$\hat{\mathcal{T}}'$
  labeled by~$x$.  By a lemma similar to Lemma~\ref{lem:TTpfreq}, the
  frequencies $\freq(\hat{\rho}', \qEmpty)$ and $\freq(\hat{\rho}',
  \qOutput)$ do exist.  The sequence $\rho''$ is obtained
  from~$\hat{\rho}'$ by removing each occurrence of states $\qEmpty$ and
  keeping occurrences of states~$\qOutput$.  It follows that the frequency
  of each state does exist in~$\rho''$.
\end{proof}

\begin{proof}[Proof of Proposition~\ref{prp:automatonCorrect}]
  By Lemma~\ref{lem:runfreq}, the conditional frequency of each finite
  run~$\gamma$ of length~$n$ in~$\rho$ is $1/(\#A)^n$.  It follows that the
  conditional frequency of each finite run~$\gamma'$ in~$\rho'$ is equal to
  its weight in~$\transducer'$ as defined auxiliary when defining the
  weights of transitions in the weighted automaton~$\mathcal{A}$.  This
  proves that the weight of the transition $p \trans{b} q$ in~$\mathcal{A}$
  is exactly the conditional frequency of the set $\Gamma_{p,b,q}$ for each
  triple $(p,b,q)$ in $Q \times B \times Q$.  More generally, the product
  of the weights of the transitions $p_0 \trans{b_1} p_1 \cdots p_{n-1}
  \trans{b_n} p_{n}$ is equal to the conditional frequency of the set
  $\Gamma_{p_0,b_1,p_1} \cdots \Gamma_{p_n,b_n,p_{n+1}}$ in~$\rho'$.
  
  It remains to prove that the frequency of each state~$q$ in~$\rho''$ is
  indeed its initial weight in the automaton~$\mathcal{A}$.  Let us recall
  that the initial vector of~$\mathcal{A}$ is the stationary distribution
  of the stochastic matrix~$P$ whose $(p,q)$-entry is the sum $\sum_{b\in
    B}{\weight_{\mathcal{A}}(p \trans{b} q)}$, which is the conditional
  frequency of $pq$ (as a word of length~$2$) in~$\rho''$.  It follows that
  the frequencies of states in~$\rho''$ must be the stationary of the
  matrix~$P$.
  
  Since the frequency of a word $v = b_1\cdots b_n$ in $\mathcal{T}'(x)$ is
  the same as the sum over all sequences $p_0,p_1\ldots,p_{n+1}$ of the
  frequencies of $\Gamma_{p_0,b_1,p_1} \cdots \Gamma_{p_n,b_n,p_{n+1}}$
  in~$\rho'$, it is the weight of the word~$v$ in the
  automaton~$\mathcal{A}$.
\end{proof}

\begin{proof}[Proofs of Theorems \ref{thm:WAinCubicTime} and \ref{thm:determinePreservation}]
  To complete the proof of Theorems \ref{thm:WAinCubicTime} and
  \ref{thm:determinePreservation}, we exhibit an algorithm deciding in
  cubic time whether an input-deterministic transducer preserves normality.
  Let $\mathcal{T}$ be the transducer $\langle Q,A,B,\delta,\{q_0\}
  \rangle$.  By definition, its size is the sum $\sum_{\tau \in
    \delta}\size{\tau}$, where the size of a single transition $\tau = p
  \trans{a|v} q$ is~$\size{\tau} = \size{av}$.  We consider the alphabets
  to be fixed so they are not taken into account when calculating
  complexity.

  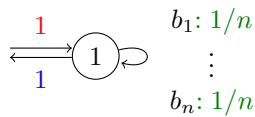
\begin{figure}[H]
    \begin{center}
      \begin{tikzpicture}
      \node[style={circle,draw}] (q6) at (0,0) {$1$} ;
      \node (qinit) at (-1.25,0) {} ;
      \path [transform canvas={yshift=0.7ex},->] (qinit) edge node {\raisebox{.6cm}{\R{1}}} (q6);
      \path [transform canvas={yshift=-0.12ex},->] (q6) edge node {\raisebox{-.8cm}{\A{1}}} (qinit);
      \path [->, loop right] (q6) edge node {\raisebox{-1cm}{$\begin{array}{c} b_1$\V{$\WG{1/n}$}$ \\ \vdots \\ b_n$\V{$\WG{1/n}$}$ \end{array}$}} (q6) ;
      \end{tikzpicture}
    \end{center}
    \caption{Weighted automaton $\mathcal{B}$
             such that
             $\weight_{\mathcal{B}}(w) = 1/(\#A)^{|w|}$}
    \label{fig:weighted3}
  \end{figure}

  By Proposition~\ref{prp:decomposition}, the algorithm decomposes the
  transducer into strongly connected components and checks that each
  recurrent one does preserve normality.  This is achieved by computing the
  weighted automaton~$\mathcal{A}$ and checking that the weight of each
  word~$w$ is $1/(\#A)^{|w|}$.  This latter step is performed by comparing
  $\mathcal{A}$ with the weighted automaton $\mathcal{B}$ such that
  $\weight_{\mathcal{B}}(w) = 1/(\#A)^{|w|}$.  The automaton~$\mathcal{B}$
  is pictured in Figure~\ref{fig:weighted3}.
\begin{itemize}
\item[]
\textbf{Input:} $\transducer = \langle Q,A,B,\delta,q_0 \rangle$ 
  an input-deterministic complete transducer.\\
\textbf{Output:} \texttt{True} if $\transducer$ preserves normality and
  \texttt{False} otherwise.\\
\textbf{Procedure:}
\begin{enumerate}
  \item[I.] Compute the strongly connected components of $\transducer$
  \item[II.] For each recurrent strongly connected component $S_i$ of $\transducer$:
    \begin{enumerate}
    \item[1.] Compute the normalized transducer $\transducer'$, equivalent to $S_i$.
    \item[2.] Use $\transducer'$ to build the weighted automaton $\automaton$:
      \begin{enumerate}
      \item[a.] Compute the weights of the transitions of~$\mathcal{A}$.
        \begin{itemize}
        \item[] Compute the matrix $E$                          
        \item[] Compute the matrix $E^*$ solving $(I-E)X=I$  
        \item[] For each $b \in B$, for each $p,q \in Q$:
          \begin{itemize}
          \item[] compute the matrix $N_b$
          \item[] define the transition $p \trans{b} q$ with
             weight $(E^*N_b)_{p,q}$.
          \end{itemize} 
        \end{itemize} 
      \item[b.] Compute the stationary distribution $\pi$ of
        the Markov chain induced by $\automaton$.
      \item[c.] Assign initial weight $\pi[i]$ to each state $i$,
        and let final weight be $1$ for all states.
      \end{enumerate}
    \item[3.] Compare $\automaton$ against the automaton $\mathcal{B}$ using
      Schützenberger's algorithm \cite{CardonCrochemore80} to check whether
      they realize the same function.
    \item[4.] If they do not compute the same function, return \texttt{False}.
    \end{enumerate}
  \item[III.] Return \texttt{True}
\end{enumerate}
\end{itemize}

Now we analyze the complexity of the algorithm.  Computing recurrent
strongly connected components can be done in $O(\size{Q}^2) \le O(n^2)$
using Kosaraju's algorithm is if the transducer is implemented with an
adjacency matrix \cite[Section~22.5]{Cormen09}.

We refer to the size of the component $\size{S_i}$ as $n_i$.  The cost of
normalizing the component is $O(n_i^2)$, mainly from filling the new
adjacency matrix.  The most expensive step when computing the transitions
and their weight is to compute $E^*$.  The cost is $O(n_i^3)$ to solve
the system of linear equations.  To compute the weights of the states we
have $O(n_i^3)$ to solve the system of equations to find the stationary
distribution.  Comparing the automaton to the one computing the expected
frequencies can be done in time $O(n_i^3)$ \cite{CardonCrochemore80} since
the coefficients of both automata are in $\mathbb{Q}$.

\end{proof}

\section*{Acknowledgements}

The authors would like to thank Ver\'onica Becher for many fruitful
discussions and suggestions.  Both authors are members of the Laboratoire
International Associ\'e INFINIS, CONICET/Universidad de Buenos
Aires--CNRS/Universit\'e Paris Diderot and they are supported by the ECOS
project PA17C04.  Carton is also partially funded by the DeLTA project
(ANR-16-CE40-0007).

\bibliographystyle{plain}
\bibliography{algorithm}

\bigskip
\bigskip

{\small
\begin{minipage}{\textwidth}
\noindent
Olivier Carton \\
Institut de Recherche en Informatique Fondamentale \\
Universit\'e Paris Diderot, France \\
Olivier.Carton@irif.fr \\
\\
Elisa Orduna\\
Universidad de Buenos Aires, Argentina\\
eorduna@dc.uba.ar
\end{minipage}
}
\end{document}